\newcommandx{\tn}[2][1=]{\todo[inline,
  linecolor=LimeGreen,backgroundcolor=LimeGreen!40,bordercolor=LimeGreen,#1]{(Thakur)
    #2}}
\newcommandx{\cjm}[2][1=]{\todo[inline,
  linecolor=YellowOrangeGreen,backgroundcolor=YellowOrange!40,bordercolor=YellowOrange,#1]{(Chris)
    #2}}
\newcommandx{\ckm}[2][1=]{\todo[inline,
  linecolor=MidnightBlue,backgroundcolor=MidnightBlue!40,bordercolor=MidnightBlue,#1]{(Curtis)
    #2}}
\newcommandx{\hz}[2][1=]{\todo[inline,
  linecolor=Orchid,backgroundcolor=Orchid!40,bordercolor=Orchid,#1]{(Hao)
    #2}}
\newcommandx{\zz}[2][1=]{\todo[inline,
  linecolor=RawSienna,backgroundcolor=RawSienna!40,bordercolor=RawSienna,#1]{(Zhen)
    #2}}
\newcommandx{\all}[2][1=]{\todo[inline,
  linecolor=red,backgroundcolor=red!80,bordercolor=red,#1]{(All) #2}}
\begin{document}

\title*{Approximation Techniques for Stochastic Analysis of Biological Systems}
\author{Thakur Neupane, Zhen Zhang\thanks{Corresponding author.}, Curtis Madsen, Hao Zheng, and Chris J. Myers}
\institute{Thakur Neupane \at Utah State University, Logan, UT USA,
  \email{thakur.neupane@aggiemail.usu.edu}
\and Zhen Zhang \at Utah State University, Logan, UT USA, \email{zhen.zhang@usu.edu}
\and Curtis Madsen \at Boston University, Boston, MA USA, \email{ckmadsen@bu.edu}
\and Hao Zheng \at University of South Florida, Tampa, FL USA, \email{haozheng@usf.edu}
\and Chris J. Myers \at University of Utah, Salt Lake City, UT USA, \email{myers@ece.utah.edu}
}

\authorrunning{Neupane et al.}

%
%
\maketitle

\abstract{
There has been an increasing demand for formal methods in the design
process of safety-critical synthetic genetic circuits. Probabilistic
model checking techniques have demonstrated significant potential in
analyzing the intrinsic probabilistic behaviors of complex genetic
circuit designs. However, its inability to scale limits its
applicability in practice. This chapter addresses the scalability problem by presenting a
state-space approximation method to remove unlikely states resulting
in a reduced, finite state representation of the infinite-state
continuous-time Markov chain that is amenable to probabilistic model
checking. The proposed method is evaluated on a design of a genetic toggle switch.
Comparisons with another state-of-art tool demonstrates both accuracy and efficiency of the presented method. 
}


\section{Introduction}

Computational biologists typically construct models to better understand and explore the possible behaviors of biological systems~\cite{Myers2009}.
By using formal methods, such as model checking, to analyze these models, researchers are able to ensure that certain properties hold in biological system designs~\cite{Heath2006}.
In order to numerically model check a system, the system's state space must be enumerated.
For systems that are highly concurrent and have infinite states, such as \emph{genetic circuits} (i.e., the collections of genes within DNA that interact to control the behavior of cells, see Section~\ref{sec-example} for more details), enumerating the state space can be computationally intractable due to the state space explosion problem.
Techniques such as partial order reduction that reduce the number of reachable states in a system have shown some promise in tackling this problem~\cite{Baier2004,Baier2006,Diaz2012}, but these methods often rely on transition dependencies based on the disablings (and/or enablings) and commutativity of independent transitions.
Most models of genetic circuits do not contain transitions that disable/enable other transitions leading researchers to seek other solutions to this problem.

Another way to reduce the state space of a system is to introduce threshold abstractions to collapse multiple states of the system together~\cite{Madsen2014}.
This type of abstraction works very well in systems where there exist groups of states in equivalence classes.
This is often the case in genetic circuits where firing a single transition does not have a great effect on the likelihood of firing other transitions in the system.
Although this type of abstraction has previously been successfully applied to genetic circuits, selecting the threshold values is currently done in a manual ad hoc manner.

This chapter presents an alternative method for deriving a reduced, finite state representation of a genetic circuit's behavior.
This method works by computing the approximate probability of reaching each state on-the-fly and stops exploring different paths when the cumulative path probability drops below a predetermined value, and these paths are routed to an abstract absorbing state, which accumulates probability leakage during the Markovian analysis. The resulting \emph{continuous-time Markov chain} (CTMC) can be analyzed using probabilistic model checking approaches to determine the probability that the original genetic circuit satisfies a desired temporal logic property given in \emph{continuous stochastic logic} (CSL)~\cite{Aziz2000,Kwiatkowska2007}. This chapter illustrates the utility of this method by applying it to a model of the genetic toggle switch and by comparing the results to a previous approach where the thresholds were determined by hand to produce the finite state representation~\cite{Madsen2013,Madsen2014}. Additionally, this method is compared with a state-of-art stochastic hybrid analysis tool on several benchmarks, and comparisons of results demonstrate both accuracy and efficiency of our proposed method. 


\section{Related Work}
\label{sec:sec-relatedWork}

To improve the scalability of probabilistic model checking, \emph{bisimulation minimization} (e.g., \cite{Fisler1998, Fisler1999, Fisler2002}) has been extended to the probabilistic setting~\cite{Katoen2007a} to achieve up to a logarithmic state space reduction. 
\emph{Probabilistic abstraction} (e.g., \cite{Katoen2007, Fecher2006, Hermanns2008}) applies coarser state merging to achieve better reduction, while ensuring a simulation relation between the abstract and concrete Markov models. 
A transition on the abstract Markov model has a range of probabilities, represented by an interval with the maximal and minimal probabilities for taking the transition.
In particular, \cite{Katoen2007} presents a theoretical framework for reducing \emph{discrete-time Markov chains} (DTMCs) and CTMCs using a three-valued abstraction and for model checking these abstractions.
However, how to partition the state space in this framework is not discussed, nor is the refinement of the abstractions in the case that inconclusive results are produced.
Although these reduction techniques can be powerful, they may not be effective in alleviating the exponential state growth caused by concurrency as they are not designed to tackle concurrency in the first place.
Unfortunately, concurrency is inherent in most synthetic biological systems. 


To address the state explosion problem, some approaches attempt to truncate the state space.
For instance, \cite{Mikeev2013} presents a method for selectively exploring states involving rare events; however, this technique requires the modification of parameters in the system to help guide this exploration.
Other approaches attempt to dynamically explore the state space and continually add states until the resulting state space satisfies a desired level of precision~\cite{Burrage2006,Munsky2006,Munsky2008}.

A probabilistic counter-example guided abstraction refinement approach is developed in~\cite{Hermanns2008,Wachter:2007:qest}.
Predicate abstraction is applied to programs in a probabilistic guarded command language, and counter-examples are represented as finite Markov chains, where additional predicates are extracted by using an SMT solver in the case that such counter-examples are spurious.
\cite{Kwiatkowska:2010:tacas} presents a compositional verification approach to probabilistic systems using assume-guarantee reasoning.
Both component assumptions and guarantees are represented as probabilistic safety properties.
Component verification can be expensive in this approach as it is reduced to a linear programming problem.
Furthermore, assumptions are derived manually.
Additionally, \cite{Wachter:2007:qest,Kwiatkowska:2010:tacas,Hermanns2008} are all based on probabilistic automata, which support non-determinism but with discrete-time semantics. 

In \cite{Madsen2014}, genetic circuit models are converted into CTMCs using operator site reduction abstractions relying on quasi steady-state approximations.
To avoid the state explosion problem, the authors employ a state aggregation method to collapse states together based on user provided thresholds.
While the application of probabilistic model checking to the reduced CTMC can produce results in a fraction of the time of simulation-based approaches, this method is incapable of quantifying the error introduced by this aggregation and relies on user input for good choices of thresholds.
While there has been work to address the former~\cite{Abate2015}, our method attempts to alleviate the latter by automatically determining a finite state representation by removing states that are found to be extremely unlikely during the generation of the CTMC from the genetic circuit model.


A similar approach to the one presented in this chapter is the sliding window abstraction~\cite{Henzinger2009}.
This method approximates a solution to the \emph{chemical master equation} (CME) by dividing the time period of interest into small time steps, iteratively constructing a window of an abstract state space that preserves the probability mass at the current time step, and then ``sliding'' the window in each subsequent time step to include newly generated states with significant probability while abstracting away those with negligible probability contribution until the last time step has elapsed.
This method effectively performs transient CTMC analysis on a manageable approximated state space and successively updates a state space approximation by following the direction in which probability mass moves as time evolves.
The abstract state space construction is based on a worst-case estimation of  lower and upper bounds on the populations of the chemical species.

A more recent improvement of the sliding window implementation is the \emph{STochastic Analysis of biochemical Reaction networks} (STAR) \cite{Lapin2011}. It computes approximate solutions to population Markov processes using a stochastic hybrid model that combines moment-based and state-based representations of probability distributions, and has been optimized to drop unlikely states and add likely states on-the-fly.

Our approach differs from the sliding window method in that it does not require many manual factors (e.g., several different initial states to compute a state update, a limited window size, etc.) to compute its state space.
Additionally, the method presented in this chapter has the potential to optimize the choice of the termination indicator factor to preserve accuracy while requiring a manageable state space. 
Finally, our approach is based on a reaction-based abstraction model, and as a result, is readily applied to genetic circuit models while the method in \cite{Henzinger2009} focuses on Markov chains that are specified by a finite set of transition classes.

\section{Preliminaries}
\label{sec-preliminaries}

The high-level modeling formalism used in this chapter is the
\emph{stochastic chemical kinetic} (SCK) model~\cite{Myers2009}. 
\begin{definition}
A SCK model is a tuple $\sck$ which is composed of $n$ chemical species $\speciesSet=\specSet$, $m$ reaction channels $\reactionSet=\reactSet$, and an initial molecule count of each chemical species at the beginning of analysis (i.e., $\initSt : \speciesSet^n \rightarrow \natNum$). A reaction $\reactDef{i}$ includes a \emph{propensity function} $\propen{i} \, : \natNum^{n} \, \rightarrow \posReal$ that corresponds to the probability of a reaction, and the \emph{state change vector} $\stChangeVec{i} \in
\mathbb{Z}^n$ that corresponds to the change in molecule count for each species due to reaction $\react{i}$. 
\end{definition}

A reaction $\react{i}$ can occur in state $\st \in \stSet$,
if its propensity is greater than zero (i.e., $\propen{i}(\st) > 0$). 
The propensity function $\propen{i}$ essentially determines the likelihood that $\react{i}$ occurs in the current state.  
%
After a reaction $\react{i}$ occurs, the state is updated as follows: $\stPrime = \st + \stChangeVec{i}$.
%

The execution of reactions in an SCK model creates a \emph{state graph} as defined below:

\begin{definition}
A SG is a tuple $\sgFull{}$ where
\begin{itemize}[label={--}]
\item $\stSet$ is a non-empty set of states,  
\item $\TR \subseteq \stSet \times \reactionSet \times \stSet$ is the set of state transitions. 
\item $\initSt : \speciesSet^n \rightarrow \natNum$ is the initial state.
\end{itemize}
Note that $|\sg{}|$ represents the state count of $\sg{}$.
\end{definition} 

For most SCK models of real biological networks, they incur an infinite number of states.
Therefore, the goal of this chapter is to find a finite subset of the
states that sufficiently represents states that are actually likely to
occur. Once a finite state graph is obtained, properties can be verified on this state graph using probabilistic model checking.


\emph{Probabilistic model checking} is a formal verification method
for checking quantitative properties of probabilistic systems. The
models of interest include DTMCs and CTMCs, both of which belong to a
class of stochastic processes that are used to reason about random
phenomena in application domains such as synthetic biology. Both Markov models are essentially a transition system with each transition labeled by a discrete probability for DTMCs or a transition rate for CTMCs.
A DTMC is a transition system with a discrete probability labeled on
each transition~\cite{Kwiatkowska2007}, which describes the likelihood
of a single step moving from one state to another. A CTMC, on the
other hand, is a transition system with a transition rate $r(s, s')$
labeled on the transition emanating from state $s$ to $s'$. This rate
determines the probability of executing this transition within $t$
time units, which is $e^{-r(s,s') t}$.
The rate $r(s, s')$ uniquely
characterizes an exponential distribution to govern the average state
residence time of state $s$, which is $\frac{1}{r(s, s')}$. CTMCs allow for modeling of
real-time systems, as the delay of a transition can be any arbitrary
real value.

Properties to verify using probabilistic model checking are specified
using \emph{Probabilistic Computation Tree Logic}
(PCTL)~\cite{Hansson1994} for DTMCs and CSL for CTMCs. PCTL extends
\emph{Computation Tree Logic} (CTL)~\cite{Clarke1986} by replacing
existential and universal path quantifiers with a probability
operator, and hence expresses probabilistic properties for a DTMC. 
In addition to path probabilities, two traditional properties of CTMCs are the \emph{transient} and \emph{steady-state} behaviors. Transient analysis reports the probability of being in each state of the Markov chain at a particular time instant, and steady-state analysis gives the corresponding probability in the long-run. 
Model checking algorithms for PCTL
(e.g. \cite{Courcoubetis1988,Courcoubetis1995,Hansson1994}) have
identical structure to the model checking algorithm for CTL. Model checking CTMC
first discretizes the CTMC into an \emph{embedded} DTMC, from which
many properties of the corresponding CTMC can be deduced, for example,
checking state reachability properties regardless of how long it
takes, and the expected time objectives. For checking state
reachability within some time bound, the CTMC is discretized into a
\emph{uniformized} DTMC with the iterative numerical method
\emph{uniformization}~\cite{Grassmann1977, Gross1984}. The uniformized
DTMC preserves the state resident time so that its transient behavior
is equal (up to some accuracy) to the corresponding CTMC.

In order to perform probabilistic model checking on CTMCs, CSL can be used.
CSL properties consist of state formulae (formulae that are either true or false in a specific state) and path formulae (formulae that are either true or false along a specific path).
CSL properties are specified using the following grammar:
\begin{eqnarray*}
\mathit{Prop} & ::= & {\tt U}(\textup{T},\Psi,\Psi) \: | \: {\tt F}(\textup{T},\Psi) \: | \: {\tt G}(\textup{T},\Psi) \: | \: {\tt St}(\Psi) \\
\Psi & ::= & {\tt true} \: | \: \Psi \wedge \Psi \: | \: \neg \Psi \: | \: \phi \geqslant \phi \: | \: \phi > \phi \: | \: \phi = \phi \\
\phi & ::= & v_i \: | \: c_i \: | \: \phi + \phi \: | \: \phi - \phi \: | \: \phi * \phi \: | \: \phi / \phi \: | \: \mathit{Prop} \\
\textup{T} & ::= & {\tt true} \: | \: \textup{T} \wedge \textup{T} \: | \: \neg \textup{T} \: | \: t \geqslant c_i \: | \: t > c_i \: | \: t = c_i
\end{eqnarray*}
where $v_i$ is a variable, $c_i$ is a constant, and $t$ stands for time in the system.
In CSL, $\Psi$ is a state formula that can be either comparisons between numerical expressions, $\phi$, or other state formula combined using logical connectives.
A CSL property, $\mathit{Prop}$, is a path property over state formula.  For example, the \emph{Until property} is of the form 
${\tt U}(T,\Psi_1,\Psi_2)$, and it returns the probability that along paths originating in the current state, $\Psi_1$ remains true until $\Psi_2$ becomes true during the time specified by time expression, $T$. The eventually operator, ${\tt F}$, is simply a shorthand for an until property where $\Psi_1$ is {\tt true}.  The globally true operator, ${\tt G}$, is another shorthand that specifies that $\Psi$ remains true during the time in which $T$ evaluates to true.  The steady-state operator, ${\tt St}$, returns the probability that when the SCK model reaches a steady state that it has reached a state where $\Psi$ is true.  Finally, CSL formulae, $\mathit{Prop}$, can be nested within other formula, creating recursive properties.

For example, the CSL property ${\tt St}(x > 5 \: \wedge \: y \geqslant 10)$ would return the probability that in the steady state, the system reaches a state where the variable $x$ is greater than $5$ and the variable $y$ is greater than or equal to $10$.
Alternatively, the CSL property ${\tt F}(t > 100 \wedge \neg(t \geqslant 200), x > 5 \: \wedge \: y \geqslant 10)$ would return the probability that the system follows an execution path originating in the initial state where the variable $x$ becomes greater than $5$ and the variable $y$ becomes greater than or equal to $10$ sometime between $100$ and $200$ time units non-inclusive.
For a path to satisfy this property, the system does not need these conditions to hold true for the entire 100 time unit interval; they just both need to become true simultaneously at some point within this time frame.


\section{Motivating Example}
\label{sec-example}

A genetic circuit is constructed using DNA, and it typically includes, at a minimum, regions that act as \emph{promoters}, \emph{ribosome binding sites} (RBS), \emph{coding sequences} (CDS), and \emph{terminators}.  The promoters are regions where \emph{transcription} is initiated when an \emph{RNA polymerase} (RNAP) molecule binds, and then begins to walk along the DNA copying the sequence to form a \emph{messenger RNA} (mRNA) molecule until it reaches the location of the terminator.  The terminator causes the RNAP to be released and thus ends transcription.  The RBS region when copied to an mRNA results in a region that binds to a \emph{ribosome} to initiate the \emph{translation} process.  During translation, the CDS region on the mRNA is used as instructions following the \emph{genetic code} to select the \emph{amino acids} to use to construct a \emph{protein}. Proteins are a fundamental component for almost all molecular functions within a cell. Proteins can also bind to promoters to \emph{activate} or \emph{repress} transcription, i.e., increasing or decreasing the associated promoter’s binding affinity to RNAP. 

The motivating example used in this chapter is a genetic circuit for a toggle switch~\cite{Gardner2000} shown in Figure~\ref{fig-toggle}.  This genetic circuit is constructed from two \emph{transcriptional units}.  The one on the left begins with the promoter $P_{tet}$ (shown as a bent arrow), followed by its RBS (shown as a half circle), a CDS that codes for the protein LacI, and finally a terminator (shown as a $\top$).  The one on the right begins with the $P_{lac}$ promoter, which initiates transcription of the CDSs for the TetR protein and the \emph{green fluorescent protein} (GFP).  GFP is a reporter, since the cells glow green when it is present.  The switch like behavior is created by mutual repression.  Namely, the TetR protein binds to $P_{tet}$ to repress LacI production, while the LacI protein binds to $P_{lac}$ to repress TetR production.  The state of the switch is changed by adding small molecule \emph{chemical inducers}.   Namely, when the switch is OFF (i.e., LacI is present but no TetR or GFP is present), IPTG can be added, which binds to LacI forming the complex C1, which is unable to repress $P_{lac}$.  This situation leads to TetR and GFP being produced, which represses LacI production and thus changes the switch to the ON state.  To change back to the OFF state, aTc can be added, which binds to TetR to form the complex C2, which is unable to repress $P_{tet}$. This situation leads to LacI being produced, which represses further production of TetR and GFP and thus the changes the genetic toggle switch to the OFF state.

\begin{figure}[tbhp]
  \centering
  \includegraphics[width=\columnwidth]{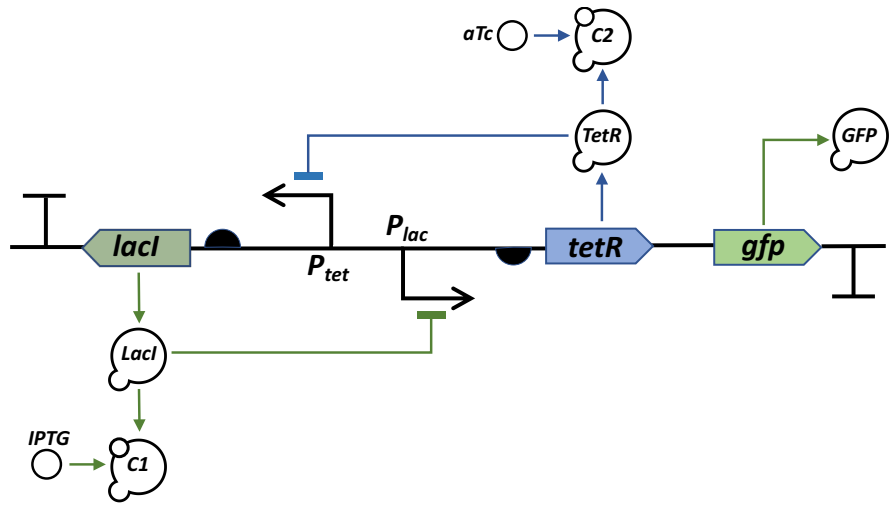}
   \caption{The genetic toggle switch. This switch is created using two repressors, LacI and TetR, which repress each others production, denoted by the $\bot$ and $\top$ arrows on
     promoters $P_{tet}$ and $P_{lac}$. The small molecule IPTG can bind to LacI, effectively reducing LacI's ability to repress TetR and GFP production.  Similarly, the small molecule aTc can bind to TetR to reduce TetR's ability to repress LacI's production. To indicate the ON and OFF states of this switch, this circuit includes the reporter protein GFP to cause the cell to glow green when it is present. 
     }
\label{fig-toggle}
\end{figure}

One possible reaction-based model of the genetic toggle switch is shown in Figure~\ref{fig-AbsChemReactsToggle}.  This model is derived from a more detailed model, using quasi-steady-state approximations and reaction-based abstractions as described in~\cite{Kuwahara2006,Myers2009}.  This model is composed of a species for each protein (i.e., LacI, TetR, and GFP) and each small molecule (i.e., IPTG and aTc).  This model also includes a production reaction for each promoter, $P_{tet}$ and $P_{lac}$, and a degradation reaction for each protein.  The reactions are shown as boxes in the diagram, with their propensity functions shown inside the boxes. The parameters for these propensity functions are given in Table~\ref{tbl-params}. Note that these are simply reasonable default parameters and not measured experimentally, and they can be easily updated if better information becomes available. The edges are labeled to indicate \emph{reactants} (r), species consumed by the reactions, \emph{products} (p), species produced by the reactions, and \emph{modifiers} (m), species neither produced or consumed.  The \emph{stoichiometry}, the number of molecules produced or consumed, for each reaction is assumed to be 1, unless indicated otherwise (e.g., production reactions produce $np$ molecules).  The degradation reactions have a propensity that is just the degradation rate, $k_d$, times the current number of molecules of the species that is degrading.  The production reactions have a propensity that is the number of molecules produced, $np$, times the rate of production, $k_p$, times the proportion of promoters bound to RNAP in steady-state.  This proportion is a function of the amount of repressor molecules present in free form (i.e., not bound to the corresponding small molecule inducer).  Further details are outside the scope of this chapter, but they can be found in~\cite{Kuwahara2006,Myers2009}.  

\begin{figure}[tbhp]
\begin{center}
\[
\scalebox{1.0}{
\xymatrix{
& *+[F]{\scriptstyle k_{d} \cospecies{\species{LacI}}} \ar@{{<}-}[d]^{r} & \\
\species{IPTG} \ar@{-}[d]^{m} & \species{LacI} \ar@{-}[dl]^{m} & \\
*+[F]{\frac{n_p k_p \cospecies{\species{P_{lac}}} K_o \cospecies{\species{RN\!AP}}} {1 + K_o \cospecies{\species{RN\!AP}} + \left(K_r \frac{\cospecies{\species{LacI}}}{1+K_c \cospecies{\species{IPTG}}}\right)^{n_c}}} \ar@{-{>}}[d]^{np,p} \ar@{-{>}}[dr]^{np,p} & & *+[F]{\frac{n_p k_p \cospecies{\species{P_{tet}}} K_o \cospecies{\species{RN\!AP}}} {1 + K_o \cospecies{\species{RN\!AP}} + \left(K_r \frac{\cospecies{\species{TetR}}}{1+K_c \cospecies{\species{aTc}}}\right)^{n_c}}} \ar@{-{>}}[ul]^{np,p} \\
\species{GFP} & \species{TetR} \ar@{-}[ur]^{m} & \species{aTc} \ar@{-}[u]^{m} \\
*+[F]{\scriptstyle k_{d} \cospecies{\species{GFP}}} \ar@{{<}-}[u]^{r} & *+[F]{\scriptstyle k_{d} \cospecies{\species{TetR}}} \ar@{{<}-}[u]^{r} &
}
}
\]
\caption[Reaction graph for the genetic toggle switch after applying reaction-based abstractions to the chemical reaction network.]{Reaction graph adapted from~\cite{Madsen2013} for the genetic toggle switch after applying reaction-based abstractions to the chemical reaction network.  
}
\label{fig-AbsChemReactsToggle}
\end{center}
\end{figure}
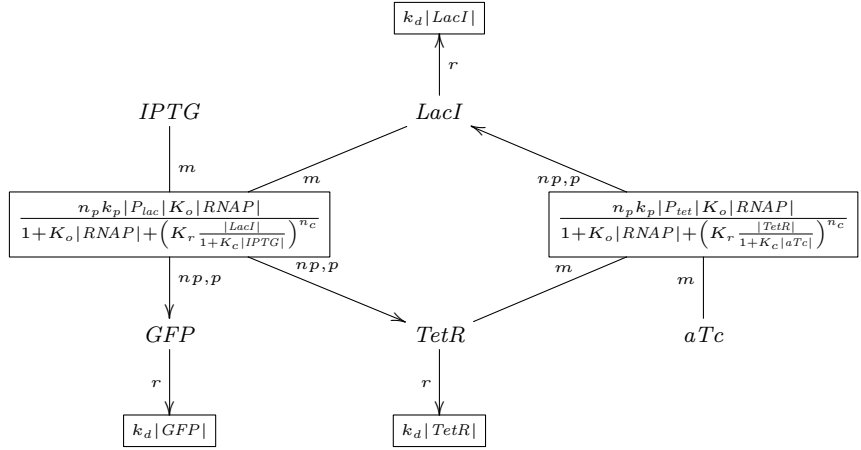

\begin{table}[tbp]
\begin{center}
\caption{List of parameters for the genetic toggle switch model.}
\vspace{2mm}
\begin{tabular}{|@{$\:\:\:\:$}l@{$\:\:\:\:$}|@{$\:\:\:\:$}c@{$\:\:\:\:$}|@{$\:\:\:\:$}c@{$\:\:\:\:$}|@{$\:\:\:\:$}c@{$\:\:\:\:$}|}
\hline
Parameter&Symbol&Value&Units\\
\hline
Degradation rate&$k_d$&0.0075& sec$^{-1}$\\
\hline
Complex formation equilibrium&$K_c$&0.05&molecule$^{-1}$\\
\hline
Stoichiometry of binding&$n_c$&2&molecules\\
\hline
Repression binding equilibrium&$K_r$&0.5&molecule$^{-1}$\\
\hline
RNAP binding equilibrium&$K_o$&0.033&molecule$^{-1}$\\
\hline
Open complex production rate&$k_p$&0.05&sec$^{-1}$\\
\hline
Stoichiometry of production&$np$&10&dimensionless\\
\hline
Number of RNAP molecules & $\cospecies{\species{RN\!AP}}$ & 30 & molecules \\
\hline
Number of $P_{tet}$ promoters & $\cospecies{\species{P_{tet}}}$ & 2 & molecules \\
\hline
Number of $P_{lac}$ promoters & $\cospecies{\species{P_{lac}}}$ & 2 & molecules \\
\hline
\end{tabular}
\label{tbl-params}
\end{center}
\end{table}

Unlike an electronic circuit, the behavior of a genetic toggle switch circuit is extremely noisy due to the small molecule counts involved. It is, therefore, necessary to evaluate a genetic circuit's behaviors using stochastic analyses. Figure~\ref{fig-toggleSSA} shows the average output response of 100 stochastic simulation runs using Gillespie’s \emph{stochastic simulation algorithm} (SSA)~\cite{Gillespie1977}. These simulations start with the same initial state with 60 LacI molecules, and 0 for other species. At time 5,000, 100 molecules of IPTG are applied, which activates the production of TetR and GFP to bring them to the high state, and represses LacI to slow down its production to allow its degradation to reduce its molecule count. When the input IPTG is removed at time 10,000 making both inputs absent, the outputs retain their current states. At time 15,000, applying inducer aTc causes the circuit to switch output states again. Removing aTc at time 20,000, once again leaves the outputs to hold their states. It should be noted that this figure shows the average output responses of 100 simulation runs, as an individual run may fail to exhibit meaningful logical behavior due to the noise in the circuit. This chapter aims to efficiently determine the probability of erroneous behavior induced by the inherent noisy nature of genetic circuits. 

\begin{sidewaysfigure}
\begin{center}  
  \includegraphics[width=0.95\textheight]{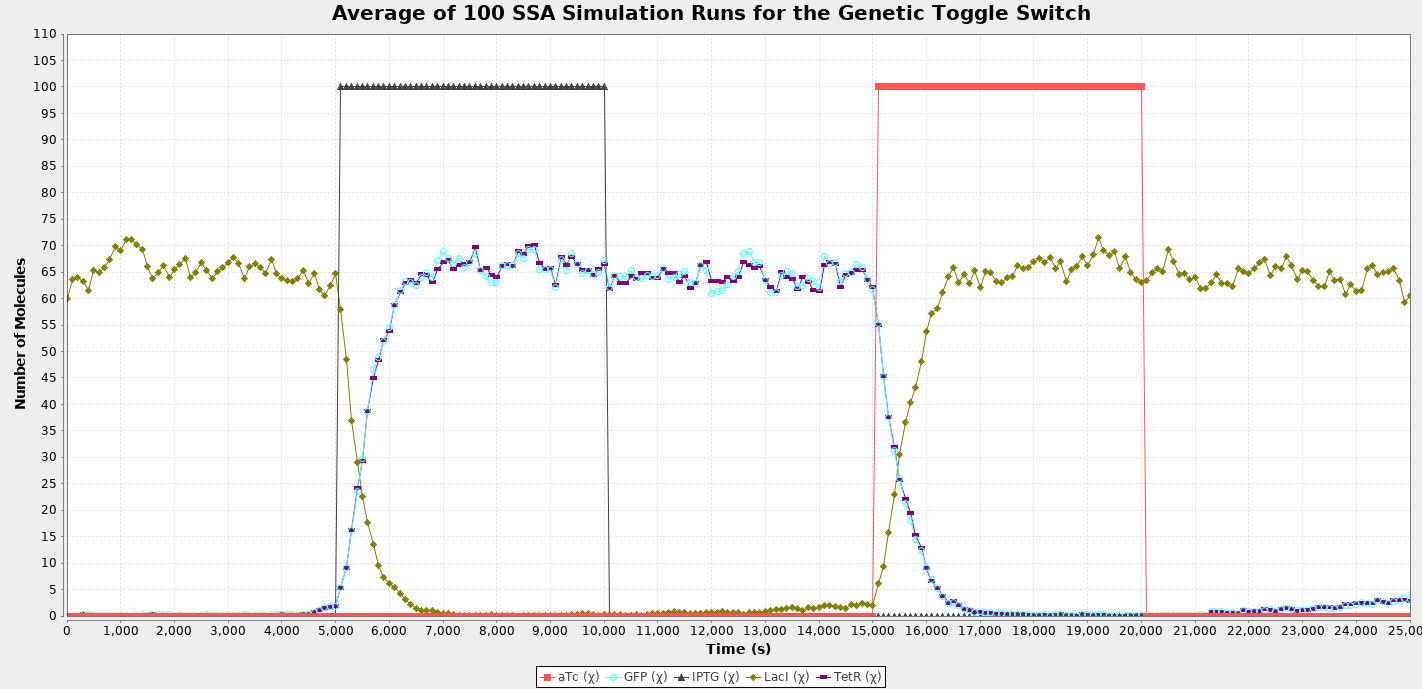}
   \parbox{0.95\textheight}{\caption{\label{fig-toggleSSA}The average of 100 stochastic simulation runs of the genetic toggle switch circuit. The LacI  molecule count drops to low soon after the introduction of IPTG at time 5,000s, which is then removed at time 10,000s. Its molecule count sharply rises to high when aTc is added at time 15,000s, which is then removed at time 20,000s, leaving LacI to stay at a high molecule count.
TetR has the opposite behavior, which is closely followed by GFP.}}
\end{center}  
\end{sidewaysfigure}

\section{State Space Approximation and Analysis} 
\label{sec-methods}

\Cref{alg-top,alg-bfs,alg-add-absorbing-state} describe the state space
approximation procedures for a given SCK model $\sck$ with
reaction-based abstractions. Note that models with reaction-based
abstractions utilize quasi steady-state
approximations~\cite{Rao2003}, where extremely fast reactions are
approximated as parameters on propensity functions to prevent
starvation of other slower reactions during stochastic analysis. The presented state
space approximation method assumes that probability mass is
distributed on a finite and relatively small number of states, and the
probability mass does not distribute uniformly as time progresses. 

With a given SCK model $\sck$, state
space generation starts by assigning the sole initial state
$\initSt$ a $1$ to the termination indicator $\stTermCurShort$, as shown in
Algorithm~\ref{alg-top}. The termination indicator is a
function $\stTermCurShort \, : \, \stSet \rightarrow \posReal$, which
indicates whether state search should terminate from
a state onwards. The initial state graph $\sg{0}$ includes the initial
state $\initSt$ as its set of states. The subsequent state graphs are then constructed and refined by
Algorithm~\ref{alg-bfs}. In general, both
state graphs $\sg{k-1}$ and $\sg{k}$ are constructed based on the same
SCK model $\sckShort$ and refined from the same initial state
$\initSt$. The difference is that $\sg{k}$ refines $\stTermCurShort$
values for some explored states in $\sg{k-1}$, which may expand
$\sg{k-1}$ to include new states in $\sg{k}$. This process of expansion
and refinement is repeated until the size of the approximate state
graph stabilizes, at which point an absorbing state is added to this
state graph by
Algorithm~\ref{alg-add-absorbing-state}. Algorithm~\ref{alg-top}
terminates by returning the approximated state graph $\sg{k}$. 

\begin{algorithm}[tbh]
\caption{Construction of approximate state space.}
\label{alg-top}
\KwIn{An SCK model $\sck$.} 
\KwOut {Approximated state graph $\sgFull{k}$.}
\BlankLine
$\sgFull{0}$, where $\stSet^{0} = \{ \initSt \}, \TR^{0} = \emptyset$\;
$\stTermCur{\initSt} := 1, \stTermNext{\initSt} := 0$\; 
$k := 0$\;

\Repeat{$|\sg{k}| = |\sg{k-1}|$} {
$k := k+1$\;
Construct finite state graph $\sgFull{k}$ for $\sckShort$ using Algorithm~\ref{alg-bfs}.\label{alg-top-call_bfs}\\
}

Update $\sg{k}$ by adding an an extra absorbing state $\stAbsorb$
using Algorithm~\ref{alg-add-absorbing-state}.
\end{algorithm}

Algorithm~\ref{alg-bfs} constructs the approximate finite state space
based on a user-defined termination indicator $\stTermToler$. Starting with
the initial state $\initSt$, all possible reactions are scheduled to be
explored (line~\ref{alg-bfs-possibleReact}). For each such reaction
$\react{i}$, its updated state $\stPrime$ is obtained by adding
the state-change vector $\stChangeVec{i}$ specified by $\react{i}$ to the current state
$\st$ (line~\ref{alg-bfs-stUpdate}). It should be noted that since the
state search in each iteration $k$ begins at the same initial state $\initSt$,
$\stPrime$ may not necessarily be a new state after this step. The termination indicator value at
the current state $\st$ is then compared against $\stTermToler$ to determine
if state exploration should continue (line~\ref{alg-bfs-determTerm}). If the former is lower, then
$\st$ becomes a (partially) terminal state, whose state expansion only
includes its outgoing transitions leading to existing states in
the current state set $\stSet^{k}$, but omits transitions leading to states not in $\stSet^{k}$. 
Therefore, if $\stPrime$ already exists in $\stSet^{k}$ (line \ref{alg-bfs-fixNxtSt}), the algorithm includes the new state-transition relation $(\st, \react{i}, \stPrime)$ and updates its
termination indicator (lines~\ref{alg-bfs-stTranUpdate} to
\ref{alg-bfs-update}). For every state $\stPrime$ to be updated, its
predecessor set is constructed (line~\ref{alg-bfs-pred}). Each element of this set is a pair of
the predecessor state $\st$ and the reaction index $i$, in which a unique
existing state transition $(\st, \react{i}, \stPrime)$ defines reachability of
$\stPrime$ from $\st$ through reaction $\react{i}$. Then the updated termination
indicator $\stTermNext{\stPrime}$ is determined by
line~\ref{alg-bfs-update}. It should be noted that
the updated termination indicator $\stTermNextShort$ is not used to update termination
indicator values for other states explored in the current iteration
$k$, and only becomes available at the end of the current iteration,
at which point it is assigned to the current termination indicator
$\stTermCurShort$ (line~\ref{alg-bfs-stTermUpdate}). For each
predecessor state $\st$ of $\stPrime$, its contribution to
$\stTermNext{\stPrime}$ is the product of its current state
termination value $\stTermCur{\st}$ and the probability of
transitioning from $\st$ to $\stPrime$, defined as the ratio of
propensity $\propen{i}$, evaluated at $\st$, to the sum of all
propensities at this state. Intuitively, $\stTermNext{\stPrime}$
accumulates path probabilities from all of its predecessor states that
have been explored in iteration $k$. On line~\ref{alg-bfs-explored}, the function 
$\texttt{explored}(\stPrime, k)$ checks whether state $\stPrime$ has
been either expanded or updated at the current iteration $k$. This
state can be a state discovered at the current iteration or any of the previous iterations.
This state is only scheduled to be explored, if it has not been explored
yet in the current iteration. 
For the case where $\stTermCur{\st} \geqslant \stTermToler$,
in addition to updating the state-transition 
relation and termination indicator for $\stPrime$, the algorithm
includes it in the state set $\stSet^{k}$ (line~\ref{alg-bfs-addToStSet}). This is
because $\st$ cannot be the terminal state due to its large
termination indicator value, and therefore its successor
$\stPrime$ becomes the potential candidate for a terminal state. This state
is then scheduled for exploration, if the current iteration has not
explored it. 


The termination indicator update is performed every time
a new incoming path is added to a state. It is crucial to have
frequent updates since a new incoming path can add its
probability contribution to the state, potentially bringing the
termination indicator value above $\stTermToler$, which in turn
changes a terminal state to be non-terminal. This update, therefore,
guarantees to explore a state with many incoming paths whose
accumulative probabilities are significant, although each individual
one might be low compared to $\stTermToler$.

\begin{algorithm}[tbh]
\caption{State space construction and approximation using breadth-first search.}
\label{alg-bfs}
\KwIn {An approximated global state graph $\sgFull{k-1}$.}
\KwOut {Updated state graph $\sgFull{k}$.}
\BlankLine
$\stSet^{k} := \stSet^{k-1}$\;
${\TR}^{k} := {\TR}^{k-1}$\;
$Enqueue(queue, \initSt)$\;
\While{$queue \neq \emptyset$} {
  $\st := Dequeue(queue)$\;
  \ForAll {$i \in \{ j \; | \; \propen{j}(\st) > 0 \}$} { \label{alg-bfs-possibleReact}
    Determine the state after reaction $\react{i}$: $\stPrime :=\st +
    \stChangeVec{i}$\label{alg-bfs-stUpdate}\;
    \If {$\stPrime \notin \stSet^{k}$} {
      $\stTermCur{\stPrime} := 0$\; \label{alg-bfs-init}
      $\stTermNext{\stPrime} := 0$\;}
    \If {$\stTermCur{\st} < \stTermToler$} { \label{alg-bfs-determTerm}
    \If {$\stPrime \in \stSet^{k}$} { \label{alg-bfs-fixNxtSt}
          ${\TR}^{k} := {\TR}^{k} \cup \{(\st, \react{i},
          \stPrime)\}$\; \label{alg-bfs-stTranUpdate}
          $\predStSet{\stPrime} :=  \{(\st, i) \; | \; (\st, \react{i},
             \stPrime) \in \TR^{k}, \forall i \in (1, \cdots, m)
             \}$\label{alg-bfs-pred}\;
          $\stTermNext{\stPrime} := \sum_{(\st, i) \in \predStSet{\stPrime}} \left(\stTermCur{\st} \cdot
            \frac{\propen{i}(\st)}{\sum_{j=1}^{m}
              \propen{j}(\st)}\right)$\label{alg-bfs-update}\;
          \If {$\lnot \texttt{explored}(\stPrime, k)$} { \label{alg-bfs-explored}
            $Enqueue(queue, \stPrime)$\;
          }
        }
    }
    \Else {
      ${\TR}^{k} := {\TR}^{k} \cup \{(\st, \react{i}, \stPrime)\}$\;
      $\predStSet{\stPrime} :=  \{(\st, i) \; | \; (\st, \react{i},
      \stPrime) \in \TR^{k}, \forall i \in (1, \cdots, m)
      \}$\;
      $\stTermNext{\stPrime} := \sum_{(\st, i) \in \predStSet{\stPrime}}
      \left(\stTermCur{\st} \cdot\frac{\propen{i}(\st)}{\sum_{j=1}^{m}
        \propen{j}(\st)}\right)$\label{alg-bfs-update2}\;
      $\stSet^{k} := \stSet^{k} \cup \{\stPrime\}$\label{alg-bfs-addToStSet}\; 
      \If {$\lnot \texttt{explored}(\stPrime, k)$} {
      	$Enqueue(queue, \stPrime)$\;
     }
   } \label{alg-bfs-1step}
 }
}
\BlankLine
\ForAll {$\st \in \stSet^{k}$}{
  $\stTermCur{\st} := \stTermNext{\st}$\label{alg-bfs-stTermUpdate}\;
}

\end{algorithm}

The theoretical state space for the genetic toggle switch described in
Section~\ref{sec-example} is infinite. To analyze the model, the
state space is truncated based on the value of $\stTermToler$. This
truncation, however, leads to probability leakage (i.e., cumulative probabilities of
reaching states not included in the explored state space) during the CTMC analysis. 
To account for probability loss, an absorbing state $\stAbsorb$ is created as the sole successor
state for all terminal states on each truncated path, and is added by
Algorithm~\ref{alg-add-absorbing-state} to the state space 
generated by Algorithm~\ref{alg-top}. For all states
in the global state set, each possible reaction for state $\st$ are
checked for exploration. For each reaction $\react{i}$, if it has not
been explored, its updated state $\stPrime$ is set to $\stAbsorb$
(line~\ref{alg-add-absorbing-state-updateTran}). It is obvious that
all unexplored transitions from such a terminal state $\st$ lead to 
the absorbing state. 

\begin{algorithm}[tbh]
\caption{Absorbing state update from approximated global state graph.}
\label{alg-add-absorbing-state}
\KwIn {An approximated global state graph $\sg{}$.}
\KwOut {Updated state graph $\sg{}$ with an absorbing state $\stAbsorb$.}
\BlankLine
  
  ${\stSet} := {\stSet} \cup \{\stAbsorb\} $\;
  \ForAll {$\st \in \stSet$} {
    \ForAll {$i \in \{ j \; | \; \propen{j}(\st) > 0 \}$} {
      Determine the state after reaction $\react{i}$: $\stPrime :=\st + \stChangeVec{i}$ \;
      \If {$(\st, \react{i}, \stPrime) \notin \TR$} { \label{alg-add-absorbing-state-updateTran}
        ${\TR} := {\TR} \cup \{(\st, \react{i}, \stAbsorb)\}$\;
      }
    }
  }
\end{algorithm}

The state graph returned by Algorithm~\ref{alg-top} is
essentially a (sparse) representation of the transition rate matrix. A
standard CTMC analysis can be applied directly to it to compute the approximate
probability distribution. It should be noted that the termination
indicator value for each state is only used to determine terminal
states, and is omitted for the CTMC analysis. 

With the addition of the absorbing state, the CTMC analysis provides a
probability bound $[l, u]$, where $0 \leqslant l < u \leqslant 1$, and
$(u - l)$ is the probability accumulated in $\stAbsorb$. Assuming the
actual probability to satisfy a CSL property $\phi$ is $p$, then it holds
that $l \leqslant p \leqslant u$. Because the lower bound $l$ does not account for probabilities from paths that, if were not truncated, would feed probabilities back to the explored states, as is the case for calculating $p$. For the upper bound $u$, it is always greater or equal to $p$. Because $u$ includes probabilities accumulated by the absorbing state, of which probabilities from truncated paths that would lead to falsification of $\phi$ are counted, in addition to probabilities of those leading to the satisfaction of $\phi$.

{\bf Complexity}: The size of generated state space models depends on the distribution of probability over states and the termination threshold.  Therefore, detailed characterization of state space complexity is challenging.  Intuitively, the state space complexity increases as the termination threshold decreases.  This is because  a lower termination threshold would allow exploration of states with lower accumulated path probabilities that would otherwise be ignored with a higher termination threshold.  Exploration of these states would likely lead to other new states.  Moreover, if the majority of probability is distributed over a small number of states, a smaller number of states may be explored compared with a more even probability distribution.

The complexity is highly dependent on the user provided termination indicator $\stTermToler$. Determining a reasonable value of $\stTermToler$ can be an iterative process. Initially, $\stTermToler$ can be set to any value that satisfies 
$0 < \stTermToler << 1$, and a state graph and probability bound 
$[l, u]$ can be generated. The user can then decrease the value of $\stTermToler$, if necessary, to tighten the probability bound window. The user can repeat the process until the probability bound returned is guaranteed to prove or disprove the given CSL property.

\section{Proof of the Termination Condition}
The presented algorithms in Section~\ref{sec-methods} are guaranteed to terminate under certain conditions. This section provides a description of the termination conditions for each algorithm, and presents a proof for termination. 

To facilitate the following proof, we first define finite paths of a state graph and depth for breadth-first search. A finite path $\rho$ of a state graph is a sequence $\st_0 \xrightarrow{\react{0}} \st_1 \xrightarrow{\react{1}} \ldots \st_{n-1} \xrightarrow{\react{n-1}} \st_n$ such that for every ${0 \leqslant i < n}$, $(\st_i, \react{i}, \st_{i+1}) \in \TR$ holds for some $\react{i}$. State $\st_n$ is reachable in $\sg{}$ if $\st_n$ is reachable from the initial state through a finite path included in $\sg{}$. Denote the set of all states with depth $\imath$ as $\stSetDepth{\imath}{}$. At depth 0, $\stSetDepth{0}{} = \{ \initSt \}$. At depth $\imath > 0$, $\stSetDepth{\imath}{}$ is obtained by collecting all newly created states resulted from the one-step BFS search on all states in $\stSetDepth{\imath - 1}{}$. Therefore, the depth for a state is determined when it is explored for the first time. Note that $\stSetDepth{0}{} \cap \stSetDepth{1}{} \cdots \stSetDepth{\imath-1}{} \cap \stSetDepth{\imath}{} = \emptyset$. 

Termination condition for Algorithm~\ref{alg-top} requires that, as both the depth $\imath$ and iteration $k$ increase, the sum of termination indicator values for all states of $\stSetDepth{\imath}{k}$ decreases, with possibly finitely many iterations where this sum remains constant. This is formulated by Theorem~\ref{theorem-termination} below. 

\begin{theorem}[Termination of Algorithm~\ref{alg-top}] Algorithm~\ref{alg-top} terminates after a finite number of iterations with a given $\stTermToler$, where $0< \stTermToler << 1$, if the state graph $\sg{\jmath+1}$ satisfies the following condition: for each depth $\jmath > 0$, there must exist depth $0 \leqslant \imath \leqslant \jmath$ such that $\st_d \xrightarrow{\react{h}} \st_{d+1} \xrightarrow{\react{i}} \ldots \st_{d+(m-1)} \xrightarrow{\react{l}} \st_{d+m}$ is a finite path in $\sg{\jmath+1}$, and $\st_d \in \stSetDepth{\imath}{\jmath + 1}$, $\st_{d+(m-1)} \in \stSetDepth{\jmath}{\jmath+1}$, $\st_{d+m} \in \stSetDepth{0}{\jmath+1} \cup \stSetDepth{1}{\jmath+1} \cup \cdots \stSetDepth{\jmath - 1}{\jmath +1} \cup  \stSetDepth{\jmath}{\jmath+1}$, and $m \in \mathbb{Z}_{\geqslant 0}$.
\label{theorem-termination}
\end{theorem}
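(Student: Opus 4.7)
The plan is to show that the outer \textbf{repeat}-loop of Algorithm~\ref{alg-top} must trigger its fixed-point condition $|\sg{k}| = |\sg{k-1}|$ after finitely many iterations. The only mechanism by which a new state enters $\stSet^{k}$ is the \textbf{else} branch following line~\ref{alg-bfs-determTerm}, which fires exactly when the current predecessor $\st$ satisfies $\stTermCur{\st} \geqslant \stTermToler$. Accordingly, it suffices to prove that, after finitely many iterations, every state sitting on the BFS frontier has its termination indicator strictly below $\stTermToler$. I would organize the argument in three steps: (i) a global mass-conservation bound on $\stTermCurShort$, (ii) a depth-wise decay argument driven by the hypothesized back-edge structure, and (iii) a fixed-point conclusion.

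For step (i), sum the update on lines~\ref{alg-bfs-update} and~\ref{alg-bfs-update2} over all $\stPrime \in \stSet^{k}$ and interchange the order of summation, which yields
\[
\sum_{\stPrime \in \stSet^{k}} \stTermNext{\stPrime} \;=\; \sum_{\st \in \stSet^{k}} \stTermCur{\st} \!\!\sum_{i \,:\, (\st,\react{i},\cdot)\in\TR^{k}}\!\! \frac{\propen{i}(\st)}{\sum_{j=1}^{m}\propen{j}(\st)} \;\leqslant\; \sum_{\st \in \stSet^{k}} \stTermCur{\st},
\]
since the inner sum is bounded by $1$. Because $\stTermCur{\initSt}=1$ is the only nonzero indicator at initialization, induction on $k$ then gives $\sum_{\st \in \stSet^{k}} \stTermCur{\st} \leqslant 1$ for every $k$. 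For step (ii), define $\Pi_\jmath^{(k)} := \sum_{\st \in \stSetDepth{\jmath}{k}} \stTermCur{\st}$ so that $\sum_{\jmath \geqslant 0} \Pi_\jmath^{(k)} \leqslant 1$ by (i). The hypothesis supplies, at each depth $\jmath > 0$, a state at depth exactly $\jmath$ that has an outgoing transition to a state at depth $\leqslant \jmath$; this back-transition recirculates a strictly positive fraction of that state's mass into the already-explored region rather than forwarding it to depth $\jmath+1$. Coupled with the uniform bound $\sum_\jmath \Pi_\jmath^{(k)} \leqslant 1$ and a telescoping estimate over depth, this forces $\Pi_\jmath^{(k)} \to 0$ as $\jmath$ grows, and hence the maximum per-state indicator on the BFS frontier eventually falls below $\stTermToler$.

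For step (iii), once a finite $\jmath^{\ast}$ exists beyond which no state at depth $\geqslant \jmath^{\ast}$ carries an indicator $\geqslant \stTermToler$, Algorithm~\ref{alg-bfs} can no longer take the state-adding branch from any such state. Because the BFS advances by at most one depth per outer iteration, and because each state has finitely many enabled reactions (so $\stSetDepth{\imath}{\cdot}$ is finite for every $\imath \leqslant \jmath^{\ast}$), after finitely many outer iterations every reachable depth-$\leqslant\jmath^{\ast}$ state has been explored, no new states can be added, and the stopping condition $|\sg{k}| = |\sg{k-1}|$ is met. The principal obstacle is the quantitative content of step (ii): the hypothesis only guarantees one back-edge per depth, whereas other depth-$\jmath$ states may route all of their indicator forward, so the existential back-edge does not a priori force a uniform contraction of $\Pi_\jmath$. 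I would extract the needed decay by combining each local back-edge with the global mass bound---each back-edge permanently retires a positive amount of mass into the finite, already-explored region, and because the cumulative mass available is bounded by $1$, summability rules out any indefinite concentration of indicator at the frontier, which in turn rules out unbounded growth of $\sg{k}$.
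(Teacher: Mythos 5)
Your proposal follows essentially the same route as the paper's proof: both track the depth-indexed sum of termination indicators (your $\Pi_\jmath^{(k)}$ is the paper's $\AccuProbStepBFS{\jmath}{\jmath+1}$), establish the non-increasing chain $1 = \AccuProbStepBFS{0}{1} \geqslant \AccuProbStepBFS{1}{2} \geqslant \cdots$, invoke the hypothesized back-edge at each depth to force a strict decrease, and conclude that once the frontier indicators fall below $\stTermToler$ the state-adding branch can no longer fire, so $|\sg{k}|$ stabilizes. The soft spot you flag at the end---that one existential back-edge per depth plus summable decrements does not by itself force the frontier mass below $\stTermToler$ in finitely many steps---is shared by the paper's proof, which likewise passes from ``the sum keeps decreasing'' directly to ``after finitely many iterations it is below $\stTermToler$'' without further justification.
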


\begin{proof}
  Initially, $\stSetDepth{}{0} = \{\initSt\}$ and $\stTermCur{\initSt} = 1$. At iteration $k=1$, during the construction of $\sg{1}$ (line~\ref{alg-top-call_bfs} of Algorithm~\ref{alg-top}), each state at depth 1, $\stDepth{1} \in \stSetDepth{1}{1}$ , is discovered for the first time when $\initSt$ is explored (line~\ref{alg-bfs-possibleReact} to~\ref{alg-bfs-1step} of Algorithm~\ref{alg-bfs}). Therefore, the current termination indicator $\stTermCur{\stDepth{1}}$ is assigned a 0 (line~\ref{alg-bfs-init} of Algorithm~\ref{alg-bfs}), but its next termination indicator $\stTermNext{\stDepth{1}}$ gets updated by $\stTermCur{\initSt}$, so that $0 < \stTermNext{\stDepth{1}} \leqslant 1$. Each new state $\stDepth{2} \in \stSetDepth{2}{1}$ generated from $\stSetDepth{1}{1}$, is ignored, since $\stTermCur{\stDepth{1}} = 0$, which is less than $\stTermToler$, and $\stDepth{2} \notin \stSet^{1}$ (line~\ref{alg-bfs-determTerm} to \ref{alg-bfs-explored} of Algorithm~\ref{alg-bfs}). Then at iteration $k=2$, the sum of termination indicators is $\AccuProbStepBFS{1}{2} = \sum_{\st \in \stSetDepth{1}{2}} \stTermCur{\st}$, where each $\stTermCur{\st}$ is a fraction of $\AccuProbStepBFS{0}{1}$, and $\AccuProbStepBFS{0}{1}= \stTermCur{\initSt} = 1$. Therefore, $\AccuProbStepBFS{1}{2}$ is solely contributed from $\AccuProbStepBFS{0}{1}$. If a self-loop transition $\{ \initSt, \react{0}, \initSt \}$ exists, then $\AccuProbStepBFS{0}{1} > \AccuProbStepBFS{1}{2}$; otherwise $\AccuProbStepBFS{0}{1} = \AccuProbStepBFS{1}{2}$. Therefore, $\AccuProbStepBFS{0}{1} \geqslant \AccuProbStepBFS{1}{2}$. Similar to the previous iteration, the updated $\stTermNext{\stDepth{2}}$ will be used in the next iteration. 
  

In general, state set $\stSetDepth{\imath}{}$ at depth $\imath$ is first obtained in iteration $\imath$ by collecting all the new states, i.e., states whose depth has not been determined, which are expanded from states in $\stSetDepth{\imath - 1}{}$. The sum of all termination indicator values for states in $\stSetDepth{\imath}{}$ is calculated at iteration $\imath + 1$ by either line~\ref{alg-bfs-update} or \ref{alg-bfs-update2} of Algorithm~\ref{alg-bfs}. To differentiate the termination indicator function $\stTermCurShort$ in different iterations, we denote $\stTermCurIter{\imath}{\st}$ as the termination indicator value for state $\st$ at iteration $\imath$. The sum of all termination indicators at iteration $\imath + 1$ is computed as follows:
\begin{align*}
  \AccuProbStepBFS{\imath}{\imath + 1} & = \sum_{\stPrime \in \stSetDepth{\imath}{\imath + 1}} \stTermCurIter{\imath+1}{\stPrime} \\
  & = \sum_{\stPrime \in \stSetDepth{\imath}{\imath + 1}}\sum_{(\st, i) \in \predStSet{\stPrime}} \left(\stTermCurIter{\imath}{\st} \cdot
    \frac{\propen{i}(\st)}{\sum_{j=1}^{m} \propen{j}(\st)}\right) \\
\end{align*}
If $\bigcup_{\stPrime \in \stSetDepth{\imath}{\imath + 1}}\predStSet{\stPrime}$ is equal to all transition firings of every state in $\stSetDepth{\imath - 1}{\imath}$, then termination indicator values for all the states at depth $\imath - 1$ are passed to depth $\imath$, and hence $\AccuProbStepBFS{\imath}{\imath + 1} =  \AccuProbStepBFS{\imath - 1}{\imath}$. On the other hand, if there exists one or more transition firings from $\stSetDepth{\imath - 1}{\imath}$ to depth other than $\imath$, then $ \AccuProbStepBFS{\imath}{\imath + 1} < \AccuProbStepBFS{\imath - 1}{\imath}$. Therefore, $\AccuProbStepBFS{\imath-1}{\imath} \geqslant \AccuProbStepBFS{\imath}{\imath + 1}$. 

We can, therefore, establish the following conclusion:
\[1 = \AccuProbStepBFS{0}{1} \geqslant \AccuProbStepBFS{1}{2} \geqslant \cdots \AccuProbStepBFS{\imath-1}{\imath} \geqslant \AccuProbStepBFS{\imath}{\imath + 1} \cdots \AccuProbStepBFS{\jmath-1}{\jmath} \geqslant \AccuProbStepBFS{\jmath}{\jmath + 1}\]
From the termination condition stated in Theorem~\ref{theorem-termination}, the slowest termination scenario, i.e., the maximal number of iterations required to terminate Algorithm~\ref{alg-top}, is the following: 
\[1 = \AccuProbStepBFS{0}{1} = \AccuProbStepBFS{1}{2} = \cdots  = \AccuProbStepBFS{\imath}{\imath+1} = \cdots =\AccuProbStepBFS{\jmath - 1}{\jmath} > \AccuProbStepBFS{\jmath}{\jmath + 1}.\]
The inequality $\AccuProbStepBFS{\jmath - 1}{\jmath} > \AccuProbStepBFS{\jmath}{\jmath + 1}$ holds only if at least one state in $\stSetDepth{\jmath - 1}{\jmath}$ executes a transition leading to a state in $\stSetDepth{0}{\jmath+1} \cup \stSetDepth{1}{\jmath+1} \cup \cdots \stSetDepth{\jmath - 1}{\jmath + 1}$, but not in $\stSetDepth{\jmath}{\jmath + 1}$. State $\st_{d+m}$ in Theorem~\ref{theorem-termination} is such a state. Additionally, the termination condition requires that at least $\AccuProbStepBFS{\imath}{\imath+1} = \cdots =\AccuProbStepBFS{\jmath - 1}{\jmath} > \AccuProbStepBFS{\jmath}{\jmath + 1}$ holds for \emph{every} depth $\jmath$. This requirement guarantees that the sum of termination indicator values keeps decreasing, with possibly many (or zero) iterations where this sum remains unchanged. Therefore, after a finite number $\xi$ of iterations, $\AccuProbStepBFS{\xi - 1}{\xi} < \stTermToler$. Since $\AccuProbStepBFS{\xi - 1}{\xi}$ is the sum of all individual termination indicator values, in the next iteration $(\xi + 1)$, termination indicator $\stTermCur{\stDepth{\xi}}$ is less than $\stTermToler$ for all states in $\stSetDepth{\xi}{\xi+1}$, and they become terminal states. Hence, $|\sg{\xi}| = |\sg{\xi+1}|$. 


Finally, Algorithm~\ref{alg-add-absorbing-state} terminates, provided its input state graph generated by Algorithm~\ref{alg-bfs} is finite. This has been proven true, and hence Algorithm~\ref{alg-add-absorbing-state} always terminates. Therefore, Theorem~\ref{alg-top} is true. \hfill $\qed$ 
\end{proof}

\section{Results}
\label{sec-results}
 
Algorithms~\ref{alg-top}, ~\ref{alg-bfs} and ~\ref{alg-add-absorbing-state} were implemented in Java as a prototype tool within the {\tt iBioSim} \emph{genetic design automation} (GDA) tool~\cite{Myers2009a,Madsen2012,Watanabe2018}. This tool constructs an approximate CTMC transition rate matrix, which is then analyzed with the help of the PRISM probabilistic model checking tool~\cite{Kwiatkowska2011}. Experiments were performed on a 3.2 GHz AMD Debian Linux PC with six cores and 64 GB of RAM. The presented CTMC approximation method was evaluated on several  CSL properties for the genetic toggle switch described in Section~\ref{sec-example}. This method is then applied to several benchmark examples, and the results are compared with those generated by the STAR tool.

\subsection{Toggle Switch}
An important metric for a toggle switch circuit is the response time. In the first set of experiments, the goal is to determine the genetic toggle switch's response time (i.e., the time it takes to switch from the OFF state to the ON state). The initial OFF state for the toggle switch has $60$ LacI, $0$ TetR, and $100$ IPTG molecules, representing the circuit has just received the set input to switch to the ON state. It should be noted that the input value of $100$ molecules is chosen to ensure that the circuit should switch to the ON state, but any moderately large value of input could be used as IPTG is represented as a boundary condition species which means that its molecule count is treated as non-depleting and that it is not consumed by any reactions that it occurs in. The CSL property, ${\tt F}(t \leqslant 2100, \text{LacI} < 20 \land \text{TetR} > 40)$, describes the probability of the circuit eventually switching to the ON state within a cell cycle of $2,100$ seconds (an approximation of the cell cycle in \emph{E. coli}~\cite{Zheng2016}). The ON state is characterized by LacI dropping below $20$ and TetR rising above $40$ molecules.

The termination indicator values are set to $10^{-5}$, $10^{-6}$, $10^{-7}$ and $10^{-9}$. Approximate state space generation and CTMC analysis are performed for each such value. In addition, intermediate verification results are generated on a time course from $0$ to $2,100$ seconds with the increment of $100$ seconds. To measure the accuracy of the presented state space approximations with different termination indicator values, a reference finite-state SCK model is created allowing both LacI and TetR to reach the upper bound of $300$ molecules each, which is significantly higher than the upper bounds of TetR and LacI for all experiments performed. The reference model, therefore, incurs significantly larger state space with $90,601$ states, but provides accurate verification results for comparison.

Both the accuracy and performance results for the response rate verification are presented in Table~\ref{tbl-resultsResponseRate}. The column ``$|\sg{}|$'' lists results
for the approximate state graph size used for the CTMC analysis, respectively. The column ``$\mathbf{\epsilon}$'' reports the difference between minimum ($\mathbf{P_{min}}$) and maximum ($\mathbf{P_{max}}$) final response rate probability, which can be taken as the uncertainty window. The columns labeled $T_{\mathbf{P_{min}}}$ and $T_{\mathbf{P_{max}}}$, provide the CTMC analysis time taken by the PRISM tool to calculate the minimum and maximum probability value, respectively.

\begin{table}[tbhp]
\begin{center}
\caption{Genetic toggle switch response rate results.}
\label{tbl-resultsResponseRate}
\vspace{2mm}
\begin{tabular*}{\textwidth}{c@{\extracolsep{\fill}}ccccccc}
\hline
$\stTermToler$ & $|\sg{}|$ & $\mathbf{P_{min}}$ & $\mathbf{P_{max}}$ & $\epsilon$ & $T_{\mathbf{P_{min}}}$ & $T_{\mathbf{P_{max}}}$ \\\hline
$\text{ref}$ & $90601$ & $0.991789007$ & $-$ & $-$ & $23.49$ & $-$ \\
$10^{-5}$   & $6171$  & $0.990640972$ & $0.991838990$ & $1.20 \times 10^{-3}$ & $0.492$ & $0.499$ \\
$10^{-6}$   & $7394$  & $0.991705919$ & $0.991794344$ & $8.84 \times 10^{-5}$ & $0.714$ & $0.629$ \\
$10^{-7}$   & $8623$  & $0.991781737$ & $0.991789578$ & $7.84 \times 10^{-6}$ & $0.811$ & $0.809$ \\
$10^{-9}$   & $11394$ & $0.991788952$ & $0.991789012$ & $5.98 \times 10^{-8}$ & $1.161$ & $1.152$ \\
\hline
\end{tabular*}
\end{center}
\end{table}

As the table shows, reducing the termination indicator value improves the accuracy of the final probability, at the price of increased performance cost. Furthermore, the final probability for $t \leqslant 2100$ of the reference model switching its state lies between the window of minimum and maximum probability for all approximate models obtained for different values of $\stTermToler$. As we decrease the value for $\stTermToler$, from $10^{-5}$ to $10^{-9}$, the error window becomes narrower from $1.20 \times 10^{-3}$ to $5.98 \times 10^{-8}$. The reference model has the final probability of $0.991789007$. With $\stTermToler=10^{-5}$, its final probability already produces very accurate final probability with significantly smaller performance cost. The approximate model only explores $6333$ states, compared to $90,601$ states from the reference model, but it produces the minimum result $0.990640972$ and maintains the error bound $1.20 \times 10^{-3}$. The runtime for the CTMC analysis on the reference model is $23.49$ seconds, much longer than the runtime for analyzing the approximate CTMC. As an additional comparison, the same toggle switch model built with pre-determined thresholds of molecule counts for LacI and TetR in~\cite{Madsen2014} produces a state graph of 70 states, and CTMC analysis with the same initial condition and CSL property reports a final probability of $98.7$ percent. The significantly smaller state space is a direct result of pre-determined thresholds, which requires prior knowledge of the circuit behavior to determine. The presented state approximation method does not require threshold determination from the user, and it achieves more accurate final probability at a slightly increased performance cost, compared to~\cite{Madsen2014}.


The second set of experiments involves computing the probability that the circuit changes state erroneously within a cell cycle of $2,100$ seconds. This behavior occurs if production of LacI erroneously and significantly inhibits TetR's production to let TetR degrade away and consequently switch state. The toggle switch is initialized to a starting state with $60$ LacI molecules, and $0$ molecules for all other species. The same CSL properties are verified and the results are summarized in Table~\ref{tbl-resultsFailureRate}.  Similar to the above experiment, the final probability for $t \leqslant 2100$ of the reference model erroneously changing its state lies between the window of minimum and maximum probability for all approximate models obtained for different values of $\stTermToler$. Decreasing the value for $\stTermToler$ from $10^{-5}$ to $10^{-9}$ decreases the error window from $4.46 \times 10^{-3}$ to $1.73 \times 10^{-7}$. Figure~\ref{fig-Error_Window_Comparison_1E-5_1E-9} shows the time-series plot for the genetic toggle switch failure rates with $\stTermToler = 10^{-5}$ and $\stTermToler = 10^{-9}$.

\begin{table}[tbhp]
\begin{center}
\caption{Genetic toggle switch failure rate results.}
\label{tbl-resultsFailureRate}
\vspace{2mm}
\begin{tabular*}{\textwidth}{c@{\extracolsep{\fill}}ccccccc}
\hline
$\stTermToler$ & $|\sg{}|$ & $\mathbf{P_{min}}$ & $\mathbf{P_{max}}$ & $\epsilon$ & $T_{\mathbf{P_{min}}}$ & $T_{\mathbf{P_{max}}}$  
\\\hline
$\text{ref}$  & $90601$ & $0.013098589$ & $-$ & $-$ & $25.041$ & $-$ \\
$10^{-5}$    & $2703$  & $0.011892475$ & $0.016356430$ & $4.46 \times 10^{-3}$ & $0.239$ & $0.241$ \\
$10^{-6}$    & $3489$  & $0.013076325$ & $0.013578975$ & $5.03 \times 10^{-4}$ & $0.287$ & $0.285$ \\
$10^{-7}$    & $4306$  & $0.013097869$ & $0.013166728$ & $6.89 \times 10^{-5}$ & $0.361$ & $0.358$ \\
$10^{-9}$    & $6697$  & $0.013098588$ & $0.013098761$ & $1.73 \times 10^{-7}$ & $0.560$ & $0.566$ \\
\hline
\end{tabular*}
\end{center}
\end{table}

\begin{figure*}[tbph]
    \centering
    \subfloat[Genetic toggle switch failure rate with $\stTermToler = 10^{-5}$\label{fig-FailureRate_ToggleSwitchAbsorb_1E-5}]{
    \includegraphics[width=\columnwidth]{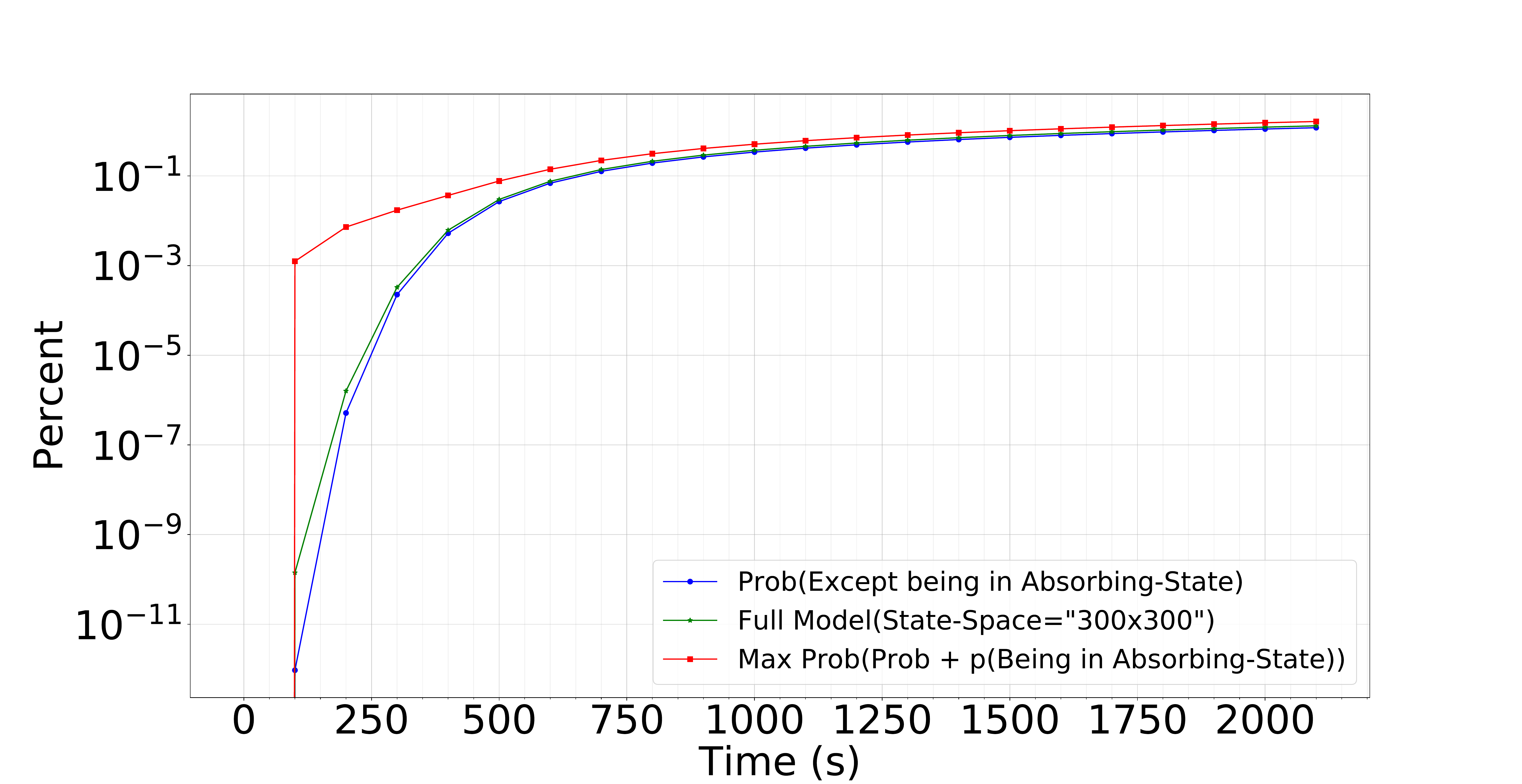}
    }\hfill
   \subfloat[Genetic toggle switch failure rate with $\stTermToler = 10^{-9}$\label{fig-Error_Window_Comparison_1E-5_1E-9}]{
     \includegraphics[width=\columnwidth]{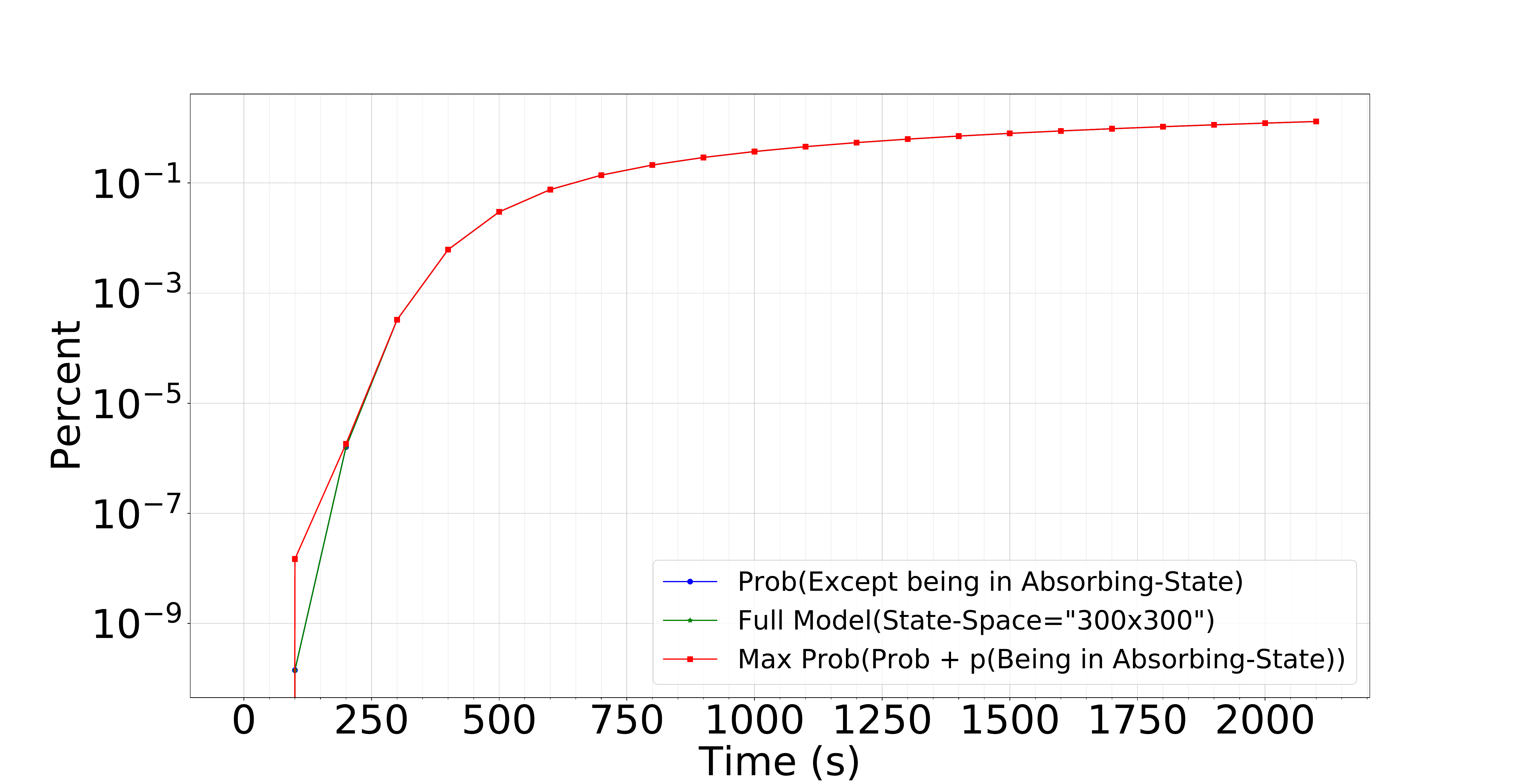}
     }%
    \caption{Error window comparison for different values of $\stTermToler$.}
  \end{figure*}


\subsection{Comparisons with the STAR Tool}
To illustrate the accuracy and efficiency of the presented method, we compared the probability distribution results with the STAR tool for the birth-death model and the presented toggle switch model. 
Table~\ref{tbl-compareBirthDeath} summarizes the comparison for a simple birth-death model, whose birth rate is 1 and death rate is 0.1. Column ``$t$'' shows the time point at which the state probability is computed, and column labeled $\epsilon_{max}$ shows the maximum absolute probability difference for the same individual state obtained from the two tools, among all explored states. Columns labeled $T_{{\tt iBioSim}}$ and $T_{{\tt PRISM}}$ list runtimes in seconds to generate the state space in {\tt iBioSim} and to analyze the model in PRISM for each given time point, respectively. Column $T_{\tt STAR}$ lists the runtime reported by STAR. The maximum probability difference reaches its peak value of $2.84 \times 10^{-6}$ at time point $t=50$. All other time points show significantly smaller errors. The run time to analyze the model in {\tt iBioSim} and PRISM is less than a second as the generated state space is only $28$ states. The STAR tool also reports a similar run time. 

Table~\ref{tbl-compareToggle} shows a comparison of results for the aforementioned toggle switch. Our proposed method produces accurate results compared to those from the STAR tool, as is indicated by the maximal probability difference ($\epsilon_{max}$).  Columns $T_{{\tt iBioSim}}$ and $T_{{\tt PRISM}}$ list runtimes in seconds to generate the state space in {\tt iBioSim} and to analyze the model in PRISM for  each given time point, respectively. Column labeled $T_{\tt STAR}$ lists the runtime reported by STAR. The combined runtime to generate the state space and analyze the model for our method is less than 24 seconds for different time points and remains almost constant as the time point $t$ increases. The STAR tool reports shorter runtime for smaller $t$ but linear increase in runtime as the time point value gets larger. 

\begin{table}[tbhp]
\caption{State probability comparison for birth-death model with $\stTermToler = 10^{-9}$.}
\label{tbl-compareBirthDeath}  
\begin{center}
\begin{tabular*}{\textwidth}{c@{\extracolsep{\fill}}cccccc}
          \hline
          $ \; t  \;$   & $\epsilon_{max}$      & $T_{{\tt iBioSim}}$ & $T_{{\tt PRISM}}$ & $T_{\tt STAR}$\\\hline
          $ \; 10 \;$   & $1.34 \times 10^{-8}$ & $0.06$ & $0.304$ & $0.22$\\
          $ \; 20 \;$   & $8.75 \times 10^{-8}$ & $0.06$ & $0.311$ & $0.34$\\
          $ \; 30 \;$   & $5.84 \times 10^{-7}$ & $0.06$ & $0.303$ & $0.46$\\
          $ \; 40 \;$   & $1.40 \times 10^{-6}$ & $0.06$ & $0.307$ & $0.59$\\
          $ \; 50 \;$   & $2.84 \times 10^{-6}$ & $0.06$ & $0.309$ & $0.72$\\
          \hline
\end{tabular*}
\end{center}
\end{table}
      
\begin{table}[tbhp]
\caption{State probability comparison for switching rate experiment of toggle-switch model with $\stTermToler = 10^{-9}$.}
\label{tbl-compareToggle}
\begin{center}
\begin{tabular*}{\textwidth}{c@{\extracolsep{\fill}}cccccc}
    \hline
          $ \; t  \;$   & $\epsilon_{max}$      & $T_{{\tt iBioSim}}$ & $T_{{\tt PRISM}}$ & $T_{\tt STAR}$\\\hline
          $ \; 400  \;$   & $1.84 \times 10^{-9}$ & $18.86$ & $4.44$ & $7.24$ \\
          $ \; 800  \;$   & $2.18 \times 10^{-9}$ & $18.86$ & $4.51$ & $17.90$\\
          $ \; 1200 \;$   & $1.91 \times 10^{-8}$ & $18.86$ & $4.59$ & $29.70$\\
          $ \; 1600 \;$   & $5.50 \times 10^{-8}$ & $18.86$ & $4.69$ & $40.65$\\
          $ \; 2000 \;$   & $1.02 \times 10^{-7}$ & $18.86$ & $4.79$ & $50.35$\\
    \hline
\end{tabular*}
\end{center}  
\end{table}

\section{Conclusion}
\label{sec-conclusion}

This chapter presents a method that builds an approximate state space of genetic circuit models to analyze infinite-state continuous-time Markov chains. This approximation method iteratively expands from the initial state using a breadth first search, computes and updates the termination indicator value for each state on-the-fly, based on the cumulative path probabilities for all incoming transitions to a state. The probability of each path segment is the ratio of the propensity of a reaction to the sum of all propensities for any given state. Our state space approximation is determined by comparing the state termination indicator to a user-provided termination threshold and only exploring states with a significant termination indicator value.
This method is capable of computing the approximate state space with no prior knowledge and is completely automated. 

For future work, we plan to improve and optimize probability approximation for re-convergent paths that close cycles during the state exploration in order to achieve potentially faster termination of the state search. We will consider different approaches to determining the termination indicator value automatically from the CSL property being analyzed. Additionally, we plan to explore augmenting our technique with bi-simulation minimization and abstraction to further minimize the generated state space and better allow for scalability. To improve performance of tool implementation, we plan to investigate tighter integration with the PRISM tool. 

\section{Acknowledgements}
The authors thank Verena Wolf for providing benchmarks and the STAR tool. The authors would also like to thank Dave Parker and Joachim Klein for providing assistance in interfacing with the PRISM tool. We also thank the reviewers for their feedback on an earlier version of this paper.

\pagebreak
\bibliographystyle{spbasic} 
\bibliography{ref} 

\end{document}